\newtheorem{theorem}{Theorem}
\newtheorem{lemma}[theorem]{Lemma}
\newtheorem{observation}[theorem]{Observation}
\newtheorem*{notation}{Notation}
\newtheorem{corollary}[theorem]{Corollary}
\newcommand\efface[1]{}
\title{Incidence coloring game and arboricity of graphs}
\author{Clément Charpentier and \'Eric Sopena\\
  \mbox{}\\
  {\small Univ. Bordeaux, LaBRI, UMR5800, F-33400 Talence}\\
  {\small CNRS, LaBRI, UMR5800, F-33400 Talence}}
\date{\today} 
\begin{document}

\maketitle


\begin{abstract}
An incidence of a graph $G$ is a pair $(v,e)$ where $v$ is a vertex of $G$ and $e$ an edge incident to $v$. 
Two incidences $(v,e)$ and $(w,f)$ are adjacent whenever $v = w$, or $e = f$, or $vw = e$ or $f$.
The incidence coloring game [S.D. Andres, The incidence game chromatic number, Discrete Appl. Math. 157 (2009), 1980–1987]
is a variation of the ordinary coloring game where the two players, Alice and Bob, alternately color the incidences of a graph, using
a given number of colors, in such a way that adjacent incidences get distinct colors. If the whole graph is colored then
Alice wins the game otherwise Bob wins the game.
The incidence game chromatic number $i_g(G)$ of a graph $G$ is the minimum number of colors for which Alice has a winning
strategy when playing the incidence coloring game on $G$. 

Andres proved that 
$i_g(G) \le 2\Delta(G) + 4k - 2$ for every $k$-degenerate graph $G$.
We show in this paper that $i_g(G) \le \lfloor\frac{3\Delta(G) - a(G)}{2}\rfloor + 8a(G) - 2$ for every graph $G$, 
where $a(G)$ stands for the arboricity of $G$, thus improving
the bound given by Andres since $a(G) \le k$ for every $k$-degenerate graph $G$. Since there exists graphs with $i_g(G) \ge \lceil\frac{3\Delta(G)}{2}\rceil$, the multiplicative constant of our bound is best possible. 
\end{abstract}

\noindent
{\bf Keywords:} Arboricity; Incidence coloring; Incidence coloring game; Incidence game chromatic number.

\section{Introduction}

All the graphs we consider are finite and undirected. For a graph $G$, we denote by $V(G)$, $E(G)$ and $\Delta(G)$ its vertex set, edge set and maximum degree, respectively. Recall that a graph is \emph{$k$-denegerate} if all of its subgraphs have minimum degree at most $k$.

The \textit{graph coloring game} on a graph $G$ is a two-player game introduced by Brams \cite{Brams1981} and rediscovered ten years after by Bodlaender \cite{Bodlaender1991}. 
Given a set of $k$ colors, Alice and Bob take turns coloring properly an uncolored vertex of $G$, Alice having the first move.
Alice wins the game if all the vertices of $G$ are eventually colored, 
while Bob wins the game whenever, at some step of the game, all the colors appear in the neighborhood
of some uncolored vertex.
The \textit{game chromatic number} $\chi_g(G)$ of $G$ is then the smallest $k$ for which Alice has a winning strategy
when playing the graph coloring game on $G$ with $k$ colors.

The problem of determining the game chromatic number of planar graphs has attracted great interest in recent years.
Kierstead and Trotter proved in 1994 that every planar graph has game chromatic number at most 33~\cite{Kierstead1994}.
This bound was decreased to 30 by Dinski and Zhu~\cite{Dinski1999}, then to 19 by Zhu~\cite{Zhu1999}, to 18 by Kierstead~\cite{Kierstead2000}
and to 17, again by Zhu~\cite{Zhu2008}, in 2008. Some other classes of graphs have also been considered (see~\cite{survey} for
a comprehensive survey).

An \emph{incidence} of a graph $G$ is a pair $(v,e)$ where $v$ is a vertex of $G$ and $e$ an edge incident to $v$. We denote by $I(G)$ the set of incidences of $G$. Two incidences $(v,e)$ and $(w,f)$ are \emph{adjacent} if either (1) $v = w$, (2) $e = f$ or (3) $vw = e$ or $f$. 
An \emph{incidence coloring} of $G$ is a coloring of its incidences in such a way that adjacent incidences get distinct colors. The smallest number of colors required for an incidence coloring of $G$ is the \emph{incidence chromatic number} of $G$, denoted by $\chi_i(G)$. 
Let $G$ be a graph and $S(G)$ be the \emph{full subdivision} of $G$, obtained from $G$ by subdividing every edge of $G$ (that is, by replacing
each edge $uv$ by a path $ux_{uv}v$, where $x_{uv}$ is a new vertex of degree 2). It is then easy to observe that every incidence coloring of $G$ corresponds to a \emph{strong edge coloring} of $S(G)$, that is a proper edge coloring of $S(G)$ such that every two edges with the same color are at distance at least 3 from each other~\cite{BM93}. Observe also that any incidence coloring of $G$ is nothing but a distance-two coloring of the line-graph of $G$~\cite{IncidenceColoringPage}, that is a proper vertex coloring of the line-graph of $G$ such that any two vertices at distance two from each other get distinct colors.

Incidence colorings have been introduced by Brualdi and Massey~\cite{BM93} in 1993. 
Upper bounds on the incidence chromatic number have been proven for various classes of graphs such
as $k$-degenerate graphs and planar graphs~\cite{HS05,HSZ04}, graphs with maximum degree three~\cite{Maydansky}, 
and exact values are known for instance for forests~\cite{BM93}, $K_4$-minor-free graphs~\cite{HSZ04}, or
Halin graphs with maximum degree at least 5~\cite{WCP02} (see~\cite{IncidenceColoringPage} for an on-line survey).

In~\cite{Andres09}, Andres introduced the \emph{incidence coloring game}, as the incidence version of the graph coloring
game, each player, on his turn, coloring an uncolored incidence of $G$ in a proper way. The \emph{incidence game chromatic number} $i_g(G)$ of a graph $G$ is then defined as the smallest $k$ for which Alice has
a winning strategy when playing the incidence coloring game on $G$ with $k$ colors. Upper bounds on the incidence game chromatic number
have been proven for $k$-degenerate graphs~\cite{Andres09} and exact values are known for cycles, stars~\cite{Andres09}, paths and wheels~\cite{Kim10}.

Andres observed that the inequalities $\lceil\frac{3}{2}\Delta(G)\rceil \le i_g(F) \le 3\Delta(G) - 1$ hold for every graph $G$~\cite{Andres09}.
For $k$-degenerate graphs, he proved the following:

\begin{theorem}[Andres, \cite{Andres09}] \label{andres}
	Let $G$ be a $k$-degenerated graph. Then we have:
\begin{enumerate}[{\rm (i)}]
\item $i_g(G) \le 2\Delta(G) + 4k - 2$,
\item $i_g(G) \le 2\Delta(G) + 3k - 1$ if $\Delta(G) \ge 5k - 1$,
\item $i_g(G) \le \Delta(G) + 8k - 2$ if $\Delta(G) \le 5k - 1$.
\end{enumerate}
\end{theorem}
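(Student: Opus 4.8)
The plan is to recast the incidence coloring game as the ordinary coloring game played on the \emph{incidence graph} $H$, whose vertices are the elements of $I(G)$ and whose edges join adjacent incidences. Then $i_g(G)=\chi_g(H)$, and since $\chi_g\le\mathrm{col}_g$ for every graph it suffices to bound the game colouring number $\mathrm{col}_g(H)$, that is, to exhibit a strategy for Alice in the marking game on $H$ that keeps small the number of neighbours of any incidence that are coloured before it. The trivial bound $i_g(G)\le 3\Delta(G)-1$ recorded above corresponds to the worst case in which \emph{all} neighbours of an incidence are coloured first, so the whole point is to use degeneracy to shave the $3\Delta$ down towards $2\Delta$ while paying only an additive $O(k)$.

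First I would fix a degeneracy ordering $v_1\prec\cdots\prec v_n$ of $G$ in which every vertex has at most $k$ earlier neighbours, and orient each edge towards its later endpoint, so that every vertex is the head of at most $k$ edges. Writing $I_{v,w}$ for the incidence $(v,vw)$, a direct check of the three adjacency rules shows that the neighbours of $I_{v,w}$ are exactly: the incidences based at $v$ other than $I_{v,w}$ (a clique of size $\deg(v)-1$), the incidences pointing into $v$ other than $I_{w,v}$ (size $\deg(v)-1$), and all incidences based at $w$ (a clique of size $\deg(w)$). Thus the neighbourhood splits into a \emph{local} part attached to the base vertex $v$, of size at most $2\Delta(G)-2$, and a \emph{remote} clique attached to the head $w$; the orientation is what will let me control the remote part, since each vertex is the head of only $\le k$ edges.

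Next I would have Alice play the activation strategy associated with a linear order on $I(G)$ that refines $\prec$, processing each incidence through its base vertex and charging remote incidences through the orientation. Against any play of Bob the local clique at $v$ may, in the worst case, be fully coloured and so contributes its $\le 2\Delta(G)-2$ forced colours; the activation strategy, however, guarantees that the number of \emph{remote} incidences coloured before $I_{v,w}$ is bounded purely in terms of $k$, and because Bob moves once between Alice's protecting moves this remote bound is doubled, yielding the additive $4k$ and hence bound~(i). Bounds~(ii) and~(iii) would then follow by optimising this accounting according to whether $\Delta(G)$ is large or small relative to $5k-1$: for $\Delta(G)\ge 5k-1$ one charges each incidence to its base vertex and trims the remote overhead to $3k-1$, whereas for $\Delta(G)\le 5k-1$ one charges it to the lower-degree endpoint, so that only one of the two local stars is paid for at the full rate, replacing $2\Delta(G)$ by $\Delta(G)$ at the cost of the larger $8k$ term.

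The hard part will be the remote bookkeeping. The incidences based at $w$ form a clique that, if coloured early by Bob, could by itself force up to $\deg(w)$ colours on $I_{v,w}$, so the entire argument rests on proving that Alice's strategy prevents more than $O(k)$ of them from being coloured before $I_{v,w}$ is reached. Making this rigorous means tracking, through the activation/marking mechanism, how many times a remote star can be ``re-entered'' via the $\le k$ in-edges at its centre, and then checking carefully the doubling that converts the static in-degree bound $k$ into the game constants $3k$, $4k$ and $8k$ of the three regimes. I expect the $\Delta$-terms to be routine once the right incidence order is fixed; the genuine difficulty is this control of the remote cliques against an adversarial Bob, together with the constant-tracking needed to separate the three cases.
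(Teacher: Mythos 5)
Your proposal correctly identifies the framework that Andres actually uses (and that this paper refines with arboricity in place of degeneracy): a degeneracy ordering with its induced orientation, the split of the neighbourhood of an incidence into a ``local'' part of size at most $2\Delta(G)-2$ and a ``remote'' part controlled by the in-degree bound $k$, and an activation strategy for Alice. Your adjacency analysis of the incidence graph is also correct. But what you have written is a plan, not a proof, and the step you defer is precisely where all the work lies. The claim that ``the activation strategy guarantees that the number of remote incidences coloured before $I_{v,w}$ is bounded purely in terms of $k$'' is exactly the analogue of Lemma~\ref{boundS} of this paper, and establishing it requires two things you do not supply: (a) a full specification of the strategy --- Alice's different responses depending on which kind of incidence Bob just coloured (cf.\ Rules (R2) and (R3)), the recursive climbing procedure, and the ``neutral move'' fallback when no incidence is activated; and (b) the charging argument --- every incidence is climbed at most twice (Observation~\ref{obs:climbing}), and each remote incidence coloured before $i$ forces a climb of one of the at most $O(k)$ fathers or uncles of $i$. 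Without these, the constants $4k-2$, $3k-1$, $8k-2$ and the threshold $\Delta(G)=5k-1$ are unsupported assertions, as you yourself acknowledge.

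Beyond incompleteness, two steps of the plan would fail as stated. First, the opening reduction ``it suffices to bound $\mathrm{col}_g(H)$'' throws away the tool that proofs of this type need: a colour-choice rule (Rule (R4) here), by which Alice deliberately reuses colours already present on down-brothers, so that many coloured neighbours contribute few distinct forbidden colours. The marking game cannot express this, since every marked neighbour may carry a fresh colour. Concretely, for the star $K_{1,\Delta}$ (where $k=1$) Bob can force the marking score of the incidence graph up to $2\Delta-2$, so no $\mathrm{col}_g$ argument can ever produce a bound whose main term is below $2\Delta(G)$; this is fatal for the $\lfloor 3\Delta/2\rfloor$-type bound of Theorem~\ref{th:general}, and it makes part (iii), whose proof in the literature goes through colour reuse (compare Lemma~\ref{bounddB}), at best doubtful in your framework, while only parts (i) and (ii) have main term $2\Delta(G)$ and are plausibly within its reach. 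Second, your explanation of the dichotomy between (ii) and (iii) --- charging an incidence to its lower-degree endpoint --- is not how the threshold arises: top incidences and down incidences obey two different counting lemmas (compare Lemmas~\ref{dBavailable} and~\ref{top}), the theorem's hypothesis on $\Delta(G)$ versus $5k-1$ marks the point where the maximum of the two requirements switches from one to the other, and no charging to endpoints by degree occurs anywhere. As it stands, the proposal selects the right strategy but proves none of the three inequalities.
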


Since forests, outerplanar graphs and planar graphs are respectively 1-, 2- and 5-degenerate, we get that
$i_g(G)\le 2\Delta(G)+2$, $i_g(G)\le 2\Delta(G) + 6$ and $i_g(G)\le 2\Delta(G) + 18$ whenever $G$ is a forest, an outerplanar graph
or a planar graph, respectively.

Recall that the arboricity $a(G)$ of a graph $G$ is the minimum number of forests into which its set of edges
can be partitioned. In this paper, we will prove the following:

\begin{theorem} \label{th:general}
For every graph $G$, $i_g(G) \le \lfloor\frac{3\Delta(G) - a(G)}{2}\rfloor + 8a(G) - 1$.
\end{theorem}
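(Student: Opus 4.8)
The plan is to reformulate the incidence coloring game as the ordinary coloring game on the \emph{conflict graph} $C$ whose vertices are the incidences of $G$ and whose edges join adjacent incidences, so that $i_g(G)=\chi_g(C)$. It then suffices to give Alice a strategy guaranteeing that, at the moment any incidence $\iota$ is colored (by either player), at most $\lfloor\frac{3\Delta(G)-a(G)}{2}\rfloor+8a(G)-2$ of its neighbours in $C$ carry pairwise distinct colors; one extra color then always leaves $\iota$ colorable. The structural input is the arboricity: first decompose $E(G)$ into $a=a(G)$ forests, root every tree, and orient each edge from child to parent. This produces an acyclic orientation $D$ of out-degree at most $a$ in which every vertex has at most $a$ ``parent'' edges, and the rooted forests provide the skeleton along which Alice will respond to Bob.

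Next I analyse the neighbourhood in $C$ of a fixed incidence $\iota=(v,vu)$. Its neighbours split into three groups: (a) the incidences $(v,f)$ sharing the vertex $v$ (at most $\deg(v)-1$ of them); (b) the incidences $(u,g)$ sharing the vertex $u$ (at most $\deg(u)$, including $(u,uv)$); and (c) the incidences $(w,wv)$ for the neighbours $w\ne u$ of $v$ (at most $\deg(v)-1$), giving the familiar total of at most $3\Delta-2$. The structural facts that drive the saving are: groups (a) and (b) are cliques, hence internally forced to distinct colors; group (c) is an \emph{independent} set, each of whose members is adjacent to every member of (a); and a member $(v,vy)$ of (a) and a member $(u,ux)$ of (b) with $y\ne u$, $x\ne v$ are \emph{not} adjacent. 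Because there is so much internal independence and so much independence \emph{between} the two big cliques, only about half of the $3\Delta$ neighbours can ever be forced to appear with distinct colors on $\iota$ at once; realising this slack is precisely the factor $\tfrac32$.

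Alice's strategy is an activation strategy along the rooted forests. Whenever Bob colors an incidence, Alice activates it and plays on a chosen ``partner'' -- the second incidence of the same edge, or an incidence one step closer to a root -- so as to keep balanced, for every still-uncolored incidence, the contributions of its two sides (the clique at $v$ and the clique at $u$) and to keep the group-(c) contribution small. The out-degree bound supplied by the orientation is what lets her do the book-keeping forest by forest: the at most $a$ out-incidences at each endpoint are peeled off into an $O(a)$ budget (this is exactly what turns $\tfrac{3\Delta}{2}$ into $\tfrac{3\Delta-a}{2}$), and the group-(c) incidences together with the few forest ``back'' incidences that the activation cannot forestall are charged to the remaining $8a-2$ budget. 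Summing the main term $\lfloor\frac{3\Delta-a}{2}\rfloor$ over the balanced two-clique contribution and the $8a-2$ term over the forest-induced contribution yields the stated inequality after adding the final $+1$.

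The hard part will be the halving bound itself, namely proving that Alice can keep the number of distinctly-colored neighbours of every uncolored incidence below $\tfrac{3\Delta}{2}+O(a)$ throughout the game, even though Bob is free to color incidences at $v$ and at $u$ with deliberately distinct colors and to fill group (c). This forces the activation strategy to respond to each of Bob's moves so as to repair any incipient imbalance between the two sides before it propagates, and to guarantee that the forest-charged contributions (group (c) plus back incidences) never exceed $8a-2$ \emph{independently} of $\Delta$. Proving that these two budgets never interfere, together with the parity/floor accounting that separates the $8a-2$ appearing here from the $8a-1$ in the statement, is where the genuine work lies; the orientation of out-degree at most $a$ furnished by the arboricity is exactly the ingredient that makes the $O(a)$ budget attainable.
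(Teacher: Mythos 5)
Your setup coincides with the paper's (partition into $a$ forests, root each tree, orient away from the roots, run an activation strategy along the forests), but what you have written is a plan, not a proof: you yourself defer ``the hard part'' --- the halving bound --- to future work, and that halving bound \emph{is} the theorem. Moreover, the mechanism you propose for it cannot work as stated. You suggest Alice keeps the two cliques ``balanced'' by answering each of Bob's moves on a partner incidence. But pairing moves only halves the number of incidences that \emph{Bob} colors; it does nothing to bound the number of \emph{distinct colors} appearing around an uncolored incidence, because Alice's own answers may introduce fresh colors, and then the set of sibling incidences of an edge-star can still exhibit close to $\Delta$ distinct colors, giving back the trivial $3\Delta$ bound. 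The missing idea in the paper is a color-reuse rule: when Alice colors a down incidence $i$ whose colored down-brothers already carry at least $4a-1$ distinct colors, she is required to pick a color already present in $\phi(dB(i))$. This needs its own feasibility lemma --- the colors of $dF(i)$ and $tB(i)$ are automatically disjoint from $\phi(dB(i))$, and the sons and uncles of $i$ contribute at most $4a-2$ forbidden colors, so a reusable color always exists --- and only with reuse does the pairing argument yield $|\phi(dB(i))| \le \lfloor |dB(i)|/2\rfloor + 2a$, which is where the coefficient $\tfrac32$ actually comes from.

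The second unproven assertion is that the ``group (c) plus back incidences'' contribution stays below $8a-2$ independently of $\Delta$. The paper obtains this from explicit bookkeeping on the activation process: climbing is defined so that every incidence is climbed at most twice (once to activate it, once to color it), and every time a son of an uncolored incidence $i$ gets colored, one of the at most $O(a)$ fathers/uncles above it is climbed; hence at the moment $i$ is colored, it has at most $4a-2$ (down case) or $5a-1$ (top case) colored sons and uncles. Without some counting of this kind, Bob can simply spend his moves coloring the group-(c) neighbours of a single incidence with pairwise distinct colors, and no balancing of the two cliques prevents that. So the gap is concrete: you need (1) a color-reuse rule with a proof that reuse is always feasible, and (2) a climb-counting lemma bounding colored sons/uncles by $O(a)$; neither follows from the structural observations about independence between the cliques that you list.
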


Recall that $i_g(G)\ge 3\Delta(G)/2$ for every graph $G$ so that the difference between the upper and the lower bound
on $i_c(G)$ only depends on the arboricity of $G$.

It is not difficult to observe that $a(G) \le k$ whenever $G$ is a $k$-degenerate graph.
Hence we get the following corollary, which improves Andres' Theorem and answers in the negative a question posed in~\cite{Andres09}:

\begin{corollary}
If $G$ is a $k$-degenerate graph, then $i_g(G)\le \lfloor\frac{3\Delta(G) - k}{2}\rfloor + 8k - 1$
\end{corollary}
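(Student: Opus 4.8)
The plan is to derive the corollary directly from Theorem~\ref{th:general}, which I am entitled to assume, by combining it with two elementary facts: that a $k$-degenerate graph has arboricity at most $k$, and that the upper bound appearing in Theorem~\ref{th:general} is nondecreasing in the arboricity parameter. Writing $f(a) = \lfloor\frac{3\Delta(G) - a}{2}\rfloor + 8a - 1$, Theorem~\ref{th:general} gives $i_g(G) \le f(a(G))$, so it will suffice to show $f(a(G)) \le f(k)$.

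First I would establish the inequality $a(G) \le k$ for every $k$-degenerate graph $G$. The cleanest route is through the Nash-Williams formula $a(G) = \max_{H} \lceil |E(H)|/(|V(H)|-1)\rceil$, where the maximum runs over all subgraphs $H$ of $G$ with at least two vertices. If $G$ is $k$-degenerate then so is each such $H$, so $H$ admits an ordering in which every vertex has at most $k$ neighbours among those preceding it; peeling the vertices off in reverse and counting the edges removed at each step yields $|E(H)| \le k(|V(H)| - 1)$. Consequently $\lceil |E(H)|/(|V(H)|-1)\rceil \le k$ for every such $H$, and taking the maximum gives $a(G) \le k$.

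Next I would verify the monotonicity of $f$. Increasing $a$ by one decreases $\lfloor\frac{3\Delta(G) - a}{2}\rfloor$ by either $0$ or $1$, while it increases the term $8a$ by $8$; hence $f(a+1) - f(a) \ge 7 > 0$, so $f$ is strictly increasing on the integers. Combining this with $a(G) \le k$ yields $f(a(G)) \le f(k) = \lfloor\frac{3\Delta(G) - k}{2}\rfloor + 8k - 1$, which is exactly the asserted bound, completing the deduction.

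Since the substance of the result is carried entirely by Theorem~\ref{th:general}, the only mild subtlety here is the interaction of the floor function with the substitution $a(G) \mapsto k$: one cannot simply "replace $a(G)$ by the larger quantity $k$" termwise, because the term $\lfloor\frac{3\Delta(G)-a}{2}\rfloor$ \emph{decreases} as $a$ grows, while $8a$ increases. The net effect is an increase, but this must be checked rather than asserted. That monotonicity computation in the previous paragraph is precisely the step I would be most careful to state explicitly, and it is the only place where the argument is more than a direct appeal to the theorem and to a standard property of degeneracy.
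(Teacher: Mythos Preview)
Your argument is correct and follows essentially the same route as the paper: the paper simply remarks that $a(G)\le k$ for $k$-degenerate graphs and states the corollary as an immediate consequence of Theorem~\ref{th:general}. Your explicit verification that $f(a)=\lfloor\frac{3\Delta(G)-a}{2}\rfloor+8a-1$ is increasing in $a$ is a welcome bit of care that the paper leaves implicit.
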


Since outerplanar graphs and planar graphs have arboricity at most 2 and 3, respectively,
we get as a corollary of Theorem~\ref{th:general} the following:

\begin{corollary} \mbox{}
\begin{enumerate}[{\rm (i)}]
\item $i_g(G)\le \lceil\frac{3\Delta(G)}{2}\rceil + 6$ for every forest $G$,
\item $i_g(G)\le \lfloor\frac{3\Delta(G)}{2}\rfloor + 14$ for every outerplanar graph $G$,
\item $i_g(G)\le \lceil\frac{3\Delta(G)}{2}\rceil + 21$ for every planar graph $G$.
\end{enumerate}
\end{corollary}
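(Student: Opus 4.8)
The plan is to derive all three bounds directly from Theorem~\ref{th:general} by substituting the relevant arboricity values and simplifying the resulting floor expressions, using the known arboricity bounds for the three classes: every forest has arboricity at most $1$, every outerplanar graph has arboricity at most $2$, and every planar graph has arboricity at most $3$ (the latter two recalled just before the statement, all three following from the Nash--Williams formula $a(G)=\max_{H\subseteq G}\lceil |E(H)|/(|V(H)|-1)\rceil$ together with the edge bounds $|E(H)|\le |V(H)|-1$, $2|V(H)|-3$, $3|V(H)|-6$).

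First I would record a monotonicity observation that allows replacing $a(G)$ by its maximum admissible value in each class. Writing $\Delta=\Delta(G)$ and $f(a)=\lfloor\frac{3\Delta-a}{2}\rfloor+8a-1$ for the right-hand side of Theorem~\ref{th:general}, I would note that passing from $a$ to $a+1$ decreases the numerator of the floor by $1$, hence the floor term by at most $1$, while increasing $8a$ by $8$; therefore $f(a+1)-f(a)\ge 7>0$, so $f$ is strictly increasing. Consequently, if $a(G)\le c$ for the appropriate constant $c\in\{1,2,3\}$, then Theorem~\ref{th:general} gives $i_g(G)\le f(a(G))\le f(c)$.

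It then remains to check that $f(c)$ equals the claimed expression in each case, a short parity computation. For forests ($c=1$) I would verify $\lfloor\frac{3\Delta-1}{2}\rfloor+7=\lceil\frac{3\Delta}{2}\rceil+6$ by splitting on the parity of $\Delta$: when $\Delta$ is even both sides equal $\frac{3\Delta}{2}+6$, and when $\Delta$ is odd both equal $\frac{3\Delta+13}{2}$. For outerplanar graphs ($c=2$) the subtracted constant is even, so $\lfloor\frac{3\Delta-2}{2}\rfloor=\lfloor\frac{3\Delta}{2}\rfloor-1$ with no parity split, giving $f(2)=\lfloor\frac{3\Delta}{2}\rfloor+14$ directly. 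For planar graphs ($c=3$) I would again split on the parity of $\Delta$ to confirm $\lfloor\frac{3\Delta-3}{2}\rfloor+23=\lceil\frac{3\Delta}{2}\rceil+21$, both sides equalling $\frac{3\Delta}{2}+21$ for even $\Delta$ and $\frac{3\Delta+43}{2}$ for odd $\Delta$.

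I do not expect any genuine obstacle: the whole argument is a substitution followed by floor/ceiling bookkeeping, with the monotonicity step ensuring that the worst case within each class is exactly the maximal arboricity. The one detail warranting care is that the stated bounds for (i) and (iii) are phrased with the ceiling $\lceil\cdot\rceil$ whereas (ii) uses the floor $\lfloor\cdot\rfloor$; this asymmetry is precisely an artifact of the parity of the subtracted constant ($a=1,3$ odd versus $a=2$ even), so the task reduces to matching the correct rounding convention in each of the three cases.
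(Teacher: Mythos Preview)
Your proposal is correct and follows the same approach as the paper, which simply presents the corollary as an immediate consequence of Theorem~\ref{th:general} together with the arboricity bounds $a(G)\le 1,2,3$ for forests, outerplanar graphs, and planar graphs. Your version is more careful than the paper's implicit argument, in that you explicitly verify the monotonicity of $f(a)$ (so that the worst case is indeed the maximal arboricity) and carry out the floor/ceiling parity checks, details the paper leaves to the reader.
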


We detail Alice's strategy in Section~\ref{sec:strategy} and prove Theorem~\ref{th:general} in Section~\ref{sec:proof}. 



\section{Alice's Strategy}
\label{sec:strategy}

We will give a strategy for Alice which allows her to win the incidence coloring game on a graph $G$ with arboricity $a(G)$ whenever the number of available colors is at least  $\lfloor\frac{3}{2}\Delta(G)\rfloor + 8a(G) - 1$. This strategy will use the concept of {\em activation strategy}~\cite{survey}, often used in the context of the ordinary graph coloring game.

Let $G$ be a graph with arboricity $a(G)=a$. We partition the edges of $G$ into $a$ forests $F_1$, ..., $F_a$, each forest containing a certain number of trees. For each tree $T$, we choose an arbitrary vertex of $T$, say $r_T$, to be the \emph{root} of $T$.



\begin{notation}{\rm
Each edge with endvertices $u$ and $v$ in a tree $T$ will be denoted by $uv$ if $dist_T(u, r_T) < dist_T(v, r_T)$, and by $vu$ if $dist_T(v, r_T) < dist_T(u, r_T)$, where $dist_T$ stands for the distance within the tree $T$ (in other words, we define an orientation of the graph $G$ in such a way that all the edges of a tree $T$ are oriented from the root towards the leaves).
}\end{notation}

We now give some notation and definitions we will use in the sequel (these definitions are illustrated in Figure~\ref{fig}).
\begin{itemize}
\item 
For every edge $uv$ belonging to some tree $T$, we say the incidence $(u, uv)$ is a \emph{top incidence} whereas the incidence $(v, uv)$ is a \emph{down incidence}.  
We then let $t(uv)=t(v,vu)=(u,uv)$ and $d(uv)=d(u,uv)=(v,uv)$.\\
Note that each vertex in a forest $F_i$ is incident to at most one down incidence belonging to $F_i$, so that each vertex in $G$ is incident to at most $a$ down incidences. 

\item 

For every incidence $i$ belonging to some edge $uv\in E(G)$, let $tF(i)=\{t(wu),\ wu\in E(G)\}$ be the set of \emph{top-fathers} of $i$,
$dF(i)=\{d(wu),\ wu\in E(G)\}$ be the set of \emph{down-fathers} of $i$ and $F(i)=tF(i)\cup dF(i)$ be the set of \emph{fathers} of $i$.\\
Note that each incidence has at most $a$ top-fathers and at most $a$ down-fathers.

\item
For every incidence $i$ belonging to some edge $uv\in E(G)$, let $tS(i)=\{t(vw),\ vw\in E(G)\}$ be the set of \emph{top-sons} of $i$,
$dS(i)=\{d(vw),\ vw\in E(G)\}$ be the set of \emph{down-sons} of $i$ and $S(i)=tS(i)\cup dS(i)$ be the set of \emph{sons} of $i$.\\
Note that each incidence has at most $\Delta(G)-1$ top-sons and at most $\Delta(G)-1$ down-sons.

\item 
For every incidence $i$ belonging to some edge $uv\in E(G)$, let $tB(i)=\{t(uw),\ uw\in E(G)\} - \{i\}$ be the set of \emph{top-brothers} of $i$,
$dB(i)=\{d(uw),\ uw\in E(G)\} - \{i\}$ be the set of \emph{down-brothers} of $i$ and $B(i)=tB(i)\cup dB(i)$ be the set of \emph{brothers} of $i$.\\
Note that each top incidence $i$ has at most $\Delta(G)-|tF(i)|-1$ top-brothers and $\Delta(G)-|tF(i)|$ down-brothers while each down incidence $j$ has at most $\Delta(G)-|tF(j)|$ top-brothers and $\Delta(G)-|tF(j)|-1$ down-brothers.\\
Note also that any two brother incidences have exactly the same set of fathers.

\item 
Finally, for every incidence $i$ belonging to some edge $uv\in E(G)$, let $tU(i)=\{t(wv),\ wv\in E(G)\}$ be the set of \emph{top-uncles} of $i$,
$dU(i)=\{d(wv),\ wv\in E(G)\}$ be the set of \emph{down-uncles} of $i$ and $U(i)=tU(i)\cup dU(i)$ be the set of \emph{uncles} of $i$ (the term "uncle" is not metaphorically correct since the uncle of an incidence $i$ is another father of the sons of $i$ rather than a brother of a father of $i$).\\
Note that each incidence has at most $a-1$ top-uncles and at most $a-1$ down-uncles.
Moreover,  we have $|dU(i)| + |tS(i)| \le \Delta(G) - 1$ for every incidence $i\in I(G)$.
\end{itemize}

\begin{figure}
\begin{center}
  \includegraphics[width=8cm]{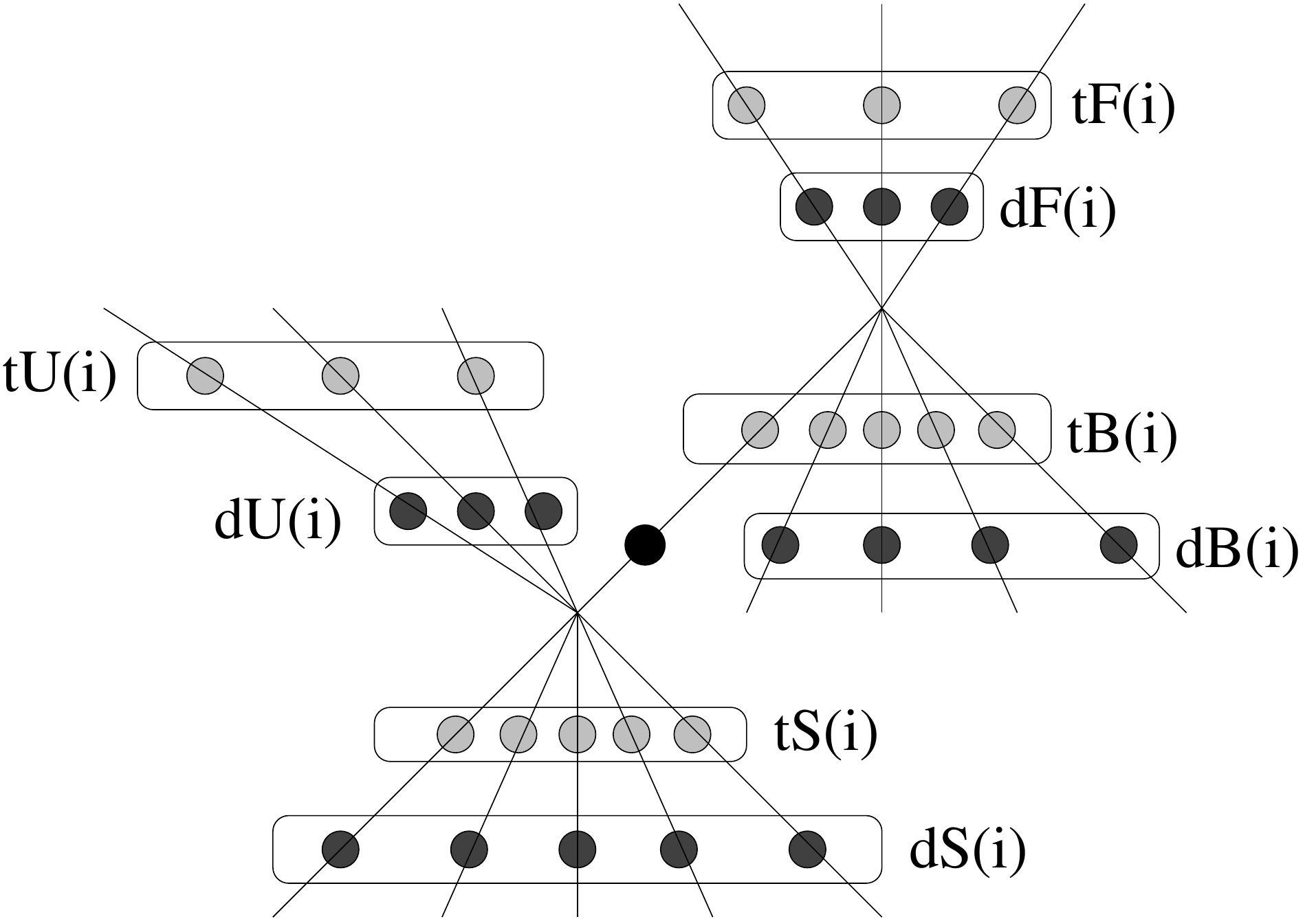}
\end{center}
	\caption{Incidences surrounding the incidence $i$.}
	\label{fig}
\end{figure}

Figure~\ref{fig} illustrates the above defined sets of incidences. Each edge is drawn in such a way that its top incidence is located above its down incidence. Incidence $i$ is drawn as a white box, top incidences are drawn as grey boxes and  down incidences (except $i$) are drawn as black boxes.

We now turn to the description of Alice's strategy.
For each set $I$ of incidences, we will denote by $I_c$ the set of colored incidences of $I$.
We will use an \emph{activation strategy}. During the game, each uncolored incidence may be either {\em active} (if Alice activated it) or {\em inactive}. When the game starts, every incidence is inactive.
When an active incidence is colored, it is no longer considered as active. In our strategy, it is sometimes possible that we say Alice activates an incidence that is already colored, 
but still considered as inactive.
For each set $I$ of incidences, we will denote by $I_a$ the set of active incidences of $I$ ($I_a$ and $I_c$ are therefore disjoint for every set of incidences $I$).

We denote by $\Phi$ the set of colors used for the game, by $\phi(i)$ the color of an incidence $i$ and, for each set $I$ of incidences, we let $\phi(I) = \bigcup_{i \in I} \phi(i)$.
As shown by Figure~\ref{fig}, the set of {\em forbidden} colors for an uncolored incidence $i$ is given by:
\begin{itemize}
\item $\phi(F(i)\cup B(i)\cup tS(i)\cup dU(i))$ if $i$ is a top incidence,
\item $\phi(dF(i)\cup tB(i)\cup S(i)\cup U(i))$ if $i$ is a down incidence.
\end{itemize}

Our objective is therefore to bound the cardinality of these sets.
We now define the subset $I_n$ of {\em neutral incidences} of $I(G)$, which contains all the incidences $j$ such that:
\begin{enumerate}[{\rm (i)}]
\item $j$ is not colored,
\item all the incidences of $F(j)$ are colored.
\end{enumerate} 

We also describe what we call a {\em neutral move} for Alice, that is a move Alice makes only if there is no neutral incidence and no activated incidence in the game. 
Let $i_0$ be any uncolored incidence of $I(G)$. 
Since there is no neutral incidence, either there is an uncolored incidence $i_1$ in $dF(i_0)$, or all the incidences of $dF(i_0)$ are colored and there is an uncolored incidence $i_1$ in $tF(i_0)$. We define in the same way incidences $i_2$ from $i_1$, $i_3$ from $i_2$, and so on, until we reach an incidence that has been already encountered. We then have $i_k=i_\ell$ for some integers $k$ and $\ell$, with $k \le \ell$. 
The neutral move of Alice then consists in activating all the incidences within the loop and coloring any one of them.

Alice's strategy uses four rules. The first three rules, (R1), (R2) and (R3) below, determine  which incidence Alice colors at each move.
The fourth rule explains which color will be used by Alice when she colors an incidence. 
\begin{enumerate}[(R1)]
	\item On her first move, 
	\begin{itemize}
		\item If there is a neutral incidence (i.e., in this case, an incidence without fathers), then Alice colors it.
		\item Otherwise, Alice makes a neutral move. 
	\end{itemize}
	\item If Bob, in his turn, colors a down incidence $i$ with no uncolored incidence in $dF(i)$, then 
	\begin{enumerate}[(R2.2.1)]
			\item If there are uncolored incidences in $dB(i)$, then Alice colors one of them,
			\item Otherwise,
				\begin{itemize}
					\item If there is a neutral incidence or an activated incidence in $I(G)$, then Alice colors it,
					\item If not, Otherwise, Alice makes a neutral move.
				\end{itemize}
		\end{enumerate}
	\item If Bob colors another incidence, then Alice \emph{climbs} it. Climbing an incidence $i$
                   is a recursive procedure, described as follows:
	\begin{enumerate}[(R3.1)]
		\item If $i$ is active, then Alice colors $i$.
		\item Otherwise, Alice activates $i$ and:
		\begin{itemize}
		\item If there are uncolored incidences in $dF(i)$, then Alice climbs one of them.
		\item If all the incidences of $dF(i)$ are colored, and if there are uncolored incidences in $tF(i)$, then Alice climbs one of them.
		\item If all the incidences of $F(i)$ are colored, then:
	                       \begin{itemize}
					\item if there is a neutral incidence or an activated incidence in $I(G)$, then Alice colors it,
					\item otherwise, Alice makes a neutral move.
				\end{itemize}
		\end{itemize}
	\end{enumerate}
        \item When Alice has to color an incidence $i$, she proceeds as follows: if $i$ is a down incidence with $|\phi(dB(i))| \ge 4a - 1$, she uses any available color in $\phi(dB(i))$; in all other cases, she chooses any available color.
\end{enumerate}

Observe that, in a neutral move, all the incidences $i_k,i_{k+1},\dots,i_\ell$ form a {\em loop} where each incidence can be reached by climbing the previous one. We consider that, when Alice does a neutral move, all the incidences are climbed at least one. 

Then we have:

\begin{observation} \label{obs:climbing}
When an inactive incidence is climbed, it is activated.
When an active incidence is climbed, it is colored.
Therefore, every incidence is climbed at most twice. 
\end{observation}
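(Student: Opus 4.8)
The first two assertions require essentially no work: they are direct restatements of the two branches of rule (R3). Indeed, (R3.1) stipulates that when Alice climbs an \emph{active} incidence she colors it, and (R3.2) stipulates that when she climbs an incidence that is not active she first activates it. Since an incidence is, by convention, inactive exactly when it is not active, the first sentence is precisely the (R3.2) branch and the second is precisely the (R3.1) branch; I would simply point to these rules.

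All the content lies in the word ``Therefore'', i.e.\ the bound of two climbs per incidence. The plan is to track, for a fixed incidence $i$, its status through the game, which is one of three possibilities: uncolored and inactive, uncolored and active, or colored (a colored incidence being counted as inactive, per the convention stated before the strategy). I would then observe that, by Parts~(1) and~(2), every climb of $i$ advances its status one step along the chain \emph{inactive} $\to$ \emph{active} $\to$ \emph{colored}: an inactive uncolored $i$ becomes active, and an active $i$ becomes colored. As this chain has length two, it suffices to prove the single remaining fact, that \emph{once $i$ is colored it is never climbed again}; the bound of two then follows immediately.

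For that last fact I would distinguish the two ways an incidence can be climbed. Inside the recursive procedure, Alice climbs only an incidence she selects from $dF(i)$ or $tF(i)$ of the incidence currently being climbed, and by the wording of (R3) she selects such incidences \emph{only when they are uncolored}; hence the recursion never reaches a colored incidence. The only other way $i$ is climbed is the top-level call of (R3), triggered when Bob has just colored $i$, and this happens at most once since $i$ is colored exactly once in the whole game. Combining these, the climbs of $i$ occurring within recursions all take place while $i$ is uncolored and therefore number at most two (the second of them colouring $i$); and if instead $i$ is the incidence Bob colors, then any earlier activating climb has already been cancelled by Bob's colouring (a colored incidence is no longer active), so the one top-level climb together with at most one prior activating climb again totals at most two. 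The main obstacle, and the point needing the most care, is the bookkeeping around the convention that ``active/inactive'' is a property of uncolored incidences while Alice may nominally activate an already-colored incidence during a top-level climb: one must check that this nominal activation never provokes a further climb of $i$, which holds because the recursion then passes to the fathers of $i$ and never returns to $i$ itself.
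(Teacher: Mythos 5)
Your proof is correct and matches the paper's treatment: the paper states this as an unproved observation, the two assertions being immediate from rule (R3), and the ``at most twice'' bound following exactly as you argue (climbs of uncolored incidences advance the chain inactive $\to$ active $\to$ colored, and a colored incidence is only ever climbed through the single top-level call triggered by Bob coloring it). The one small omission is that climbs also occur during a neutral move, which is neither a recursion step nor a Bob-triggered call; but since the loop in a neutral move is built entirely from uncolored incidences, your argument covers that case verbatim.
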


\begin{observation} \label{obs:coloring}
Alice only colors neutral incidences or active incidences (typically, incidences colored by Rule (R2.2.1) are neutral incidences), except when she makes a neutral move. 
\end{observation}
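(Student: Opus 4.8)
The plan is to prove the statement by a direct case analysis on the rule that dictates Alice's move, checking in each case that the incidence she colors is either neutral or active, the neutral‑move clauses being exactly the stated exception. First I would clear away the routine cases. Under (R1), either a neutral incidence exists and Alice colors it (note that an incidence with no fathers is vacuously neutral, so the first move of (R1) always colors a neutral incidence), or she makes a neutral move, which is the permitted exception. In (R2.2.2) and in the last clause of (R3.2) the rule itself tells Alice to color a neutral or an active incidence when one is present and to make a neutral move otherwise, so nothing is to be verified. Finally, by its recursive description together with Observation~\ref{obs:climbing}, the climbing procedure terminates only by coloring an incidence that is already active; hence every coloring performed in (R3.1) colors an active incidence. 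Thus (R1), (R2.2.2), (R3.1) and (R3.2) all conform immediately.

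The only remaining rule, and the crux, is (R2.2.1). Here Bob has just colored a down incidence $i$ for which every incidence of $dF(i)$ is already colored, and Alice colors an uncolored down‑brother $j\in dB(i)$. Since any two brother incidences have the same set of fathers, $F(j)=F(i)=tF(i)\cup dF(i)$; the hypothesis of (R2) gives that $dF(i)$ is entirely colored, and $j$ is uncolored by choice. Consequently $j$ is neutral precisely when $tF(i)$ is also colored, and the whole statement reduces to analysing the status of the top‑fathers of $i$ at this moment. When they happen to be colored, $j$ is neutral (this is the \emph{typical} case signalled in the statement); the burden is to show that in the remaining configurations $j$ is instead active.

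The hard part is exactly this dichotomy, and it is governed by the relative order in which top‑ and down‑father incidences are processed. The structural fact I would exploit is that the climbing procedure always ascends through a down‑father before a top‑father, so that along any activation chain the down incidences above a vertex are touched no later than the top incidences above them; combined with Observation~\ref{obs:climbing} (each incidence is climbed, hence activated and then colored, at most twice) this controls how $dF(i)$ can become complete. My plan is to package this into an invariant maintained at the start of each of Bob's turns, asserting that for every incidence whose down‑fathers are all colored, either its top‑fathers are all colored as well or each of its uncolored down‑brothers has already been climbed, and therefore activated. I would verify this invariant by induction on the moves, checking that neither a free move of Bob nor Alice's paired response can complete the down‑fathers of a vertex while simultaneously leaving a top‑father uncolored and a down‑brother inactive; in (R2.2.1) the set $dF(i)$ is complete already before Bob's move, so the invariant applies to $i$ and forces $j$ to be neutral or active, as required.

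I expect the invariant itself to be the main obstacle. The subtlety is that Bob may, on a single move, complete some $dF(\cdot)$ while the matching top incidence is still uncolored, so neither disjunct of the invariant is \emph{a priori} guaranteed; the delicate point is to show that the activation chain produced by Alice's climbing response always repairs this—either by having colored the outstanding top‑fathers earlier, or by having passed through and activated the relevant down‑brothers. Pinning down, in every configuration, which of these two mechanisms is in force is where the real work of the proof lies; once it is done, the case analysis above yields the statement without further computation.
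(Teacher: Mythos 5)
Your treatment of the routine cases is fine, and you have correctly isolated the one real issue: the paper offers no proof of this observation at all — it is meant to be read directly off the rules, with (R2.2.2), (R3.1) and (R3.2) coloring neutral or active incidences by their very wording — and the only rule for which the claim is not literally built in is (R2.2.1), which is exactly the case the authors hedge with the word ``typically''. However, your proposal does not close that case: the invariant you introduce (every incidence whose down-fathers are all colored has either all top-fathers colored or all uncolored down-brothers already activated) is announced, acknowledged to be ``the main obstacle'', and never proved. As it stands this is a proof plan, not a proof.

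Worse, the invariant is false, so no induction will establish it. Take a tree rooted at $w$ with three children $u_1,u_2,u_3$, each $u_j$ having two children $v_{j,1},v_{j,2}$ (so $a=1$). Initially the only neutral incidences are the six incidences of the edges $wu_j$ (they have no fathers), so by (R1) Alice colors one of them, say in the branch of $u_1$. Bob then colors $d(wu_2)$; rule (R2) applies vacuously since $dF(d(wu_2))=\emptyset$, and by (R2.2.1) Alice colors an uncolored incidence of $dB(d(wu_2))=\{d(wu_1),d(wu_3)\}$. In every line of play, after these three moves there is an index $j$ with $d(wu_j)$ colored, $t(wu_j)$ uncolored, and no incidence has ever been activated, because Alice has only used (R1) and (R2.2.1), neither of which activates anything (activation occurs only while climbing under (R3) or in a neutral move, and neutral moves are excluded while neutral incidences exist). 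Bob now colors $d(u_jv_{j,1})$: rule (R2) applies since $dF(d(u_jv_{j,1}))=\{d(wu_j)\}$ is colored, and (R2.2.1) forces Alice to color $d(u_jv_{j,2})$, which is neither neutral (its father $t(wu_j)$ is uncolored) nor active. This refutes your invariant already at the start of Bob's second turn, and indeed refutes the literal reading of the observation that you set out to prove. The correct conclusion is not that a cleverer invariant is needed, but that the observation must be read as the informal summary the authors intend: rules other than (R2.2.1) color only neutral or active incidences (or are neutral moves), while (R2.2.1) colors incidences whose down-fathers are all colored and which are therefore neutral only ``typically''. Any argument that later relies on the strict statement — as your plan does — would have to either weaken the claim in this way or strengthen Alice's strategy (for instance, by constraining which incidences she colors under (R1) and (R2.2.1)); it cannot be proved as you stated it.
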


\section{Proof ot Theorem \ref{th:general}}
\label{sec:proof}

We now prove a series of lemmas from which the proof of Theorem~\ref{th:general} will follow.

\begin{lemma} \label{boundS}
When Alice or Bob colors a down incidence $i$, we have $$|S_c(i)| + |U_c(i)| \le 4a - 2.$$
When Alice or Bob colors a top incidence $i$, we have $$|tS_c(i)| + |dU_c(i)| \le 5a - 1.$$
\end{lemma}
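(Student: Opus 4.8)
The plan is to analyze the single instant at which $i$ is coloured and to bound separately the colored uncles and the colored sons occurring in the relevant forbidden set, namely $S(i)\cup U(i)$ for a down incidence and $tS(i)\cup dU(i)$ for a top incidence. Writing $i$ as an incidence of the edge $uv$ with $v$ its far endpoint, every uncle of $i$ lies on an edge entering $v$, of which there is at most one per forest; hence $|tU(i)|\le a$ and $|dU(i)|\le a-1$, so the colored uncles contribute at most $2a-1$ in the down case and at most $a-1$ in the top case. The whole difficulty is therefore to bound the colored \emph{sons}, since $|tS(i)|$ and $|dS(i)|$ can each be as large as $\Delta(G)-1$; this is precisely where the activation strategy must be used.

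The key structural fact I would isolate first is that $i$ is always a \emph{father} of each of its top-sons: for a top-son $t(vw)$ the edge $uv$ enters $v$, so $i\in dF(t(vw))$ when $i$ is a down incidence and $i\in tF(t(vw))$ when $i$ is a top incidence. Consequently, \emph{before} $i$ is coloured no top-son of $i$ can be neutral, and by Observation~\ref{obs:coloring} Alice never colours a top-son as a neutral incidence before $i$ (outside of a neutral move). Thus every top-son coloured before $i$ must first have been activated, and by Observation~\ref{obs:climbing} each incidence is climbed at most twice. I would then charge each colored top-son to the climb it triggers on one of its fathers: when a top-son is activated, rule (R3.2) has Alice climb one of its still-uncolored down-fathers, and these are exactly the at most $a$ down incidences $d(\cdot v)$ on the edges entering $v$. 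This "down-father region" absorbs at most $2a$ climb-events, and since $i$ itself lies in it and is not yet coloured, the tally drops to $2a-1$; as the charging is injective this bounds the colored top-sons by $2a-1$. In the \emph{top} case one argues identically but with the full parent region (top- and down-fathers on the edges entering $v$), of size at most $2a$, giving about $4a$ colored top-sons; adding the $a-1$ colored down-uncles yields $5a-1$, which settles that case.

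The main obstacle is the down case, where the down-sons $dS(i)=\{d(vw)\}$ must also be counted, yet $i$ is \emph{not} among their fathers (these lie on the edges entering $w$, not $v$), so the argument above does not charge them to the down-father region and they could a priori be coloured independently of $i$. Here I would use that a down-son $d(vw)$ coloured before $i$ must either already have been activated (and is then subject to the at-most-twice bound of Observation~\ref{obs:climbing}) or be coloured as a neutral incidence, in which case \emph{all} its fathers, and in particular its top-partner $t(vw)\in tF(d(vw))$, are already coloured; but $t(vw)$ is itself a top-son of $i$. Combining this pairing with rules (R2) and (R4), which govern exactly how Alice reacts when Bob colours a down incidence all of whose down-fathers are already coloured, one shows that the top-sons and down-sons \emph{share} the $2a-1$ budget rather than each consuming it: once the climb reaching $i$ has activated it with all its fathers coloured, Alice must colour $i$ promptly (as the freshly activated incidence), so that no down-son can be added after its top-partner without $i$ being coloured first. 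Small configurations such as the three-vertex path on $c,m,\ell$ show both that the resulting bound $4a-2$ is tight and that this promptness is essential. Verifying that this joint charging is genuinely injective with multiplicity at most two—that no climb-event of the down-father region is counted for both a top-son and its down-partner—is the delicate point I expect to require the most care, and it is exactly what rules (R2) and (R4) are designed to secure.
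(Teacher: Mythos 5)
Your overall charging mechanism---charge each colored son of $i$ to a climb of an incidence in the father region of the son edges, and use Observation~\ref{obs:climbing} that every incidence is climbed at most twice---is exactly the paper's argument, and your treatment of the \emph{top} case is essentially the paper's proof. The genuine gap is in the down case, and it stems from a misreading of the definitions. In the paper, the fathers of an incidence depend only on the edge it belongs to: for \emph{any} incidence of an edge $vw$ (oriented from $v$ towards the leaves), whether top or down, its fathers are the incidences of the edges entering $v$. In particular, for a down incidence $i=d(uv)$, the down-son $d(vw)$ satisfies $dF(d(vw))=\{d(uv)\}\cup dU(i)\ni i$, exactly as the top-son $t(vw)$ does. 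Your claim that the fathers of $d(vw)$ ``lie on the edges entering $w$, not $v$''---and, later, that $t(vw)\in tF(d(vw))$---contradicts these definitions. Consequently, the ``main obstacle'' you describe does not exist: down-sons and top-sons are charged to the \emph{same} region $\{i\}\cup dU(i)$, of size at most $a$ (the climb from any son always lands in this region, because $i$ itself is an uncolored down-father, and rule (R3) prefers uncolored down-fathers). This gives $|S_c(i)|\le 2\,(|dU(i)|+1)\le 2a$, and with $|U_c(i)|\le |U(i)|\le 2a-2$ the bound $4a-2$ follows at once; no pairing between $d(vw)$ and $t(vw)$ is needed.

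The substitute argument you sketch for the down-sons would not stand on its own even if the obstacle were real: the ``promptness'' property you invoke (that Alice must color $i$ immediately once it is activated with all fathers colored) is not guaranteed by the rules---when a climb terminates, Alice colors \emph{some} neutral or activated incidence, not necessarily the one just activated---and rule (R4) only governs \emph{which color} Alice uses, not which incidence she colors, so it cannot ``secure injectivity'' of any charging; the tightness claim about a three-vertex path on $c,m,\ell$ refers to nothing in the paper. Two smaller slips: (a) your assertion that every colored top-son ``must first have been activated'' fails when Bob colors an inactive incidence; the paper instead charges Alice's \emph{response} move (by rule (R3) she climbs the incidence Bob just colored, which activates it and propagates the climb to an uncolored father, so the climb event exists in all cases); (b) your uncle count $|tU(i)|\le a$ disagrees with the paper's $|tU(i)|\le a-1$ (uncles exclude the edge $uv$ itself, whose top incidence is a \emph{brother} of $d(uv)$), although with the correct counts your arithmetic would still close.
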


\begin{proof}
Let first $i$ be a down incidence that has just been colored by Bob or Alice.
If $|S_c(i)|=0$, then $|S_c(i)| + |U_c(i)| = |U_c(i)| \le |U(i)| \le 2a-2$.
Otherwise, let $j$ be an incidence from $S(i)$ which was colored before $i$.
\begin{itemize}
	\item If $j$ was colored by Bob, then Alice has climbed $i$ or some other incidence from $dU(i)$ in her next move by Rule (R2.1). 
	\item If $j$ was colored by Alice, then
\begin{itemize}
        \item either $j$ was an active incidence and, when $j$ has been activated, Alice has climbed either $d(i)$, or $i$, or some other incidence from $U(i)$,
	\item or Alice has made a neutral move and, in the same move, has activated either $d(i)$, or $i$, or some other incidence from $U(i)$.
\end{itemize}
\end{itemize}
By Observation~\ref{obs:climbing} every incidence is climbed at most twice, and thus $|S_c(i)| \le 2 \times (|dU(i)| + 1)$.
Since $|dU(i)| \le a-1$, we have $|S_c(i)| \le 2a$. Moreover, since $|U_c(i)|  \le |U(i)| \le 2a - 2$, we get $|S_c(i)| +|U_c(i)| \le 4a - 2$ as required. 


Let now $i$ be a top incidence that has just been colored by Bob or Alice.
If $|tS_c(i)|=0$, then $|tS_c(i)| + |dU_c(i)| = |dU_c(i)| \le |dU(i)| \le a-1$.
Otherwise, let $j$ be an incidence from $tS(i)$ which was colored before $i$. 
\begin{itemize}
	\item If $j$ was colored by Bob then, in her next move, Alice either has climbed $d(i)$ or some other incidence from $dU(i)$ by Rule (R2.1),
 or $i$ or some other incidence from $tU(i)$ by Rule (R2.3).
	\item If $j$ was colored by Alice, then
\begin{itemize}
        \item either $j$ was an active incidence and, when $j$ has been activated, Alice has climbed either $d(i)$, or $i$, or some other incidence from $U(i)$,
	\item or Alice has made a neutral move and, in the same move, has activated either $d(i)$, or $i$, or some other incidence from $U(i)$.
\end{itemize}
\end{itemize}
By Observation~\ref{obs:climbing} every incidence is climbed at most twice, and thus $|tS_c(i)| \le 2 \times (|U(i)| + 2)$.
Since $|U(i)|\le 2a-2$, we have $|tS_c(i)| \le 4a$.
Moreover, since $|dU_c(i)|  \le |dU(i)| \le a - 1$, we get $|tS_c(i)| +|dU_c(i)| \le 5a - 1$ as required.
\end{proof}

\begin{lemma} \label{dBavailable}
Whenever Alice or Bob colors a down incidence $i$, there is always an available color for $i$ if $|\Phi| \ge \Delta(G) + 5a - 2$. Moreover, if $|\phi(dB(i))| \ge 4a - 1$, then there is always an available color in $\phi(dB(i))$ for coloring $i$.
\end{lemma}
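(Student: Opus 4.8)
The plan is to bound, for a down incidence $i$ that is about to be colored, the set of colors forbidden to it, which by the discussion of Section~\ref{sec:strategy} equals $\phi(dF(i)\cup tB(i)\cup S(i)\cup U(i))$. The whole argument rests on Lemma~\ref{boundS}, which already controls the ``son/uncle'' part: at the moment $i$ is colored we have $|\phi(S(i)\cup U(i))| \le |S_c(i)| + |U_c(i)| \le 4a-2$. The remaining work is to control $\phi(dF(i)\cup tB(i))$ for the first assertion, and to show this latter set shares no color with $\phi(dB(i))$ for the ``moreover'' assertion.

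First I would establish the existence of some available color. I would split the forbidden set as $\phi(dF(i)\cup tB(i))\cup\phi(S(i)\cup U(i))$ and bound the two pieces separately. For the first piece, I would use that $dF(i)$ and $tB(i)$ are indexed respectively by the parent edges and the child edges of the vertex $u$ of $i$, so that $|dF(i)| = |tF(i)|$ and $|tB(i)| \le \Delta(G) - |tF(i)|$; hence $|dF(i)| + |tB(i)| \le \deg(u) \le \Delta(G)$. Combined with the $4a-2$ bound from Lemma~\ref{boundS}, this gives at most $\Delta(G)+4a-2$ forbidden colors, which is at most $|\Phi|-1$ whenever $|\Phi|\ge\Delta(G)+5a-2$ (using $a\ge 1$). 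Thus an available color always exists.

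For the ``moreover'' part the key claim is that \emph{no} color of $\phi(dB(i))$ can be forbidden to $i$ by an incidence of $dF(i)\cup tB(i)$, so that only the incidences of $S(i)\cup U(i)$ can block such colors. To prove this I would check directly, from the adjacency rule, that every down-brother $d(uw)\in dB(i)$ is adjacent to every down-father in $dF(i)$ and to every top-brother in $tB(i)$: in each case the edge joining the two vertices involved is precisely the edge $uw$ carrying the down-brother, so the adjacency condition ``$vw=e$ or $f$'' is met. Consequently $\phi(dB(i))$ is disjoint from $\phi(dF(i)\cup tB(i))$, and the only colors of $\phi(dB(i))$ that can be unavailable for $i$ lie in $\phi(S(i)\cup U(i))$, a set of size at most $4a-2$ by Lemma~\ref{boundS}. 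Since $|\phi(dB(i))|\ge 4a-1 > 4a-2$, at least one color of $\phi(dB(i))$ remains available for $i$, as required.

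The main obstacle I anticipate is precisely this adjacency bookkeeping in the last step: the conclusion that down-fathers and top-brothers interact with the down-brothers only through genuine adjacency (and therefore cannot accidentally reuse a down-brother's color) must be verified carefully from the definition of adjacent incidences, paying attention to the orientation conventions and to the fact that a parent edge and a child edge at $u$ are always distinct (an edge lies in a single forest with a single orientation). Once this disjointness is in hand, both statements reduce to substituting Lemma~\ref{boundS} into an elementary count.
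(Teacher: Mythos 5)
Your proof is correct and follows essentially the same route as the paper's: split the forbidden colors for the down incidence $i$ into $\phi(dF(i)\cup tB(i))$ (bounded via the degree of the vertex $u$) and $\phi(S(i)\cup U(i))$ (bounded by $4a-2$ via Lemma~\ref{boundS}), then for the second assertion use that $\phi(dB(i))$ is disjoint from $\phi(dF(i)\cup tB(i))$, so only the at most $4a-2$ colors of $\phi(S(i)\cup U(i))$ can block a color of $\phi(dB(i))$. If anything, you are more careful than the paper, which states $|dF(i)|+|tB(i)|\le\Delta(G)-1$ (the correct count is $\deg(u)\le\Delta(G)$, as you compute) and merely asserts the adjacency-based disjointness that you verify explicitly; both discrepancies are harmless since, as you note, the slack coming from $a\ge 1$ absorbs the extra unit.
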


\begin{proof}
When Alice or Bob colors a down incidence $i$, the forbidden colors for $i$ are the colors of $tB(i)$, $dF(i)$, $S(i)$ and $U(i)$. 

Observe that $|dF(i)| + |tB(i)| \le \Delta(G) - 1$ for each down incidence $i$, so $|\phi(dF(i))| + |\phi(tB(i))| \le \Delta(G) - 1$.

Now, since $|\phi(S(i) )| + |\phi(U(i))| \le |S_c(i)| + |U_c(i)| \le 4a - 2$ by Lemma~\ref{boundS}, we get that there are at most $\Delta(G) + 5a - 3$ forbidden colors, and therefore an available color for $i$ whenever $|\Phi| \ge \Delta(G) + 5a - 2$. 

Moreover, since the colors of $\phi(dF(i))$ and $\phi(tB(i))$ are all distinct from those of $\phi(dB(i))$, there are at most 
$|S_c(i)| + |U_c(i)| \le 4a - 2$ colors of $\phi(dB(i))$ that are forbidden for $i$, and therefore an available color for $i$ whenever $|\phi(dB(i))| \ge 4a - 1$.
\end{proof}

\begin{lemma} \label{bounddB}
For every incidence $i$, $|\phi(dB(i))| \le \lfloor\frac{|dB(i)|}{2}\rfloor + 2a$.
\end{lemma}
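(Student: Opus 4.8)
The plan is to track, along the course of the game, how colours accumulate on the set of down-brothers of $i$. Writing $uv$ for the edge of $i$ with $u$ the endvertex closer to the root, the down-brothers of $i$ are exactly the down incidences $d(uw)$ lying on the edges leaving $u$; together with $i$ (when $i$ is itself such a down incidence) they form a set $\mathcal{D}$ of pairwise non-adjacent incidences, all of which share the very same set $dF(i)$ of at most $a$ down-fathers. Two facts will drive the argument. First, a reuse mechanism: by Rule (R4) together with Lemma~\ref{dBavailable}, as soon as $|\phi(dB(i))| \ge 4a-1$, every time Alice colours a down-brother she recolours it with a colour already present on $dB(i)$, so that from that point on Alice never increases $|\phi(dB(i))|$. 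Second, a pairing mechanism: once all the (at most $a$) down-fathers in $dF(i)$ are coloured, Rule (R2.2.1) forces Alice to answer each colouring by Bob of a down-brother with the immediate colouring of another, still uncoloured, down-brother.

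With these two facts in hand, I would argue as follows. Order the down-brothers by the time they are coloured and let $t \mapsto c_t$ be the number of distinct colours present after the $t$-th of them has been coloured, so that each step increases this count by at most one. Up to the first moment the count reaches $4a-1$, at most $4a-1$ colours have appeared, and this requires at least $4a-1$ colourings. After that moment, the pairing mechanism lets me group the remaining colourings into Bob--Alice pairs in which Bob contributes at most one new colour while Alice, by the reuse mechanism, contributes none; hence each such pair raises the count by at most one. Amortising the $4a-1$ initial colours at the same rate of one per two colourings then yields $|\phi(dB(i))| \le \frac{|dB(i)|}{2} + \frac{4a-1}{2} = \frac{|dB(i)|}{2} + 2a - \frac{1}{2} \le \lfloor\frac{|dB(i)|}{2}\rfloor + 2a$. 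The case $|dB(i)| \le 4a$ is immediate, since then $|\phi(dB(i))| \le |dB(i)| \le \lfloor\frac{|dB(i)|}{2}\rfloor + \lceil\frac{|dB(i)|}{2}\rceil \le \lfloor\frac{|dB(i)|}{2}\rfloor + 2a$.

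The delicate point---and the part I expect to cost the most care---is the bookkeeping at the two boundaries of this scheme. On one side, before every incidence of $dF(i)$ is coloured, a colouring by Bob of a down-brother triggers Rule (R3) rather than (R2.2.1), so Alice may climb away from $\mathcal{D}$ instead of answering inside it; I would bound the number of such unpaired colourings using $|dF(i)| \le a$ together with Observation~\ref{obs:climbing} (each incidence is climbed at most twice), so that the down-fathers get coloured after only a bounded number of moves and this whole phase is absorbed into the additive constant. On the other side, Alice may also colour down-brothers through Rules (R1), (R2.2.2), (R3) or a neutral move rather than through (R2.2.1); these colourings are harmless, since past the threshold they reuse a colour by (R4) and before it they only hasten reaching the threshold, so they cannot make the count exceed the bound. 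Checking that every stray move indeed fits one of these two descriptions, and that the amortisation survives when Bob colours fewer than half of the down-brothers, is the main obstacle to turning this outline into a formal proof.
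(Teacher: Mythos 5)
Your overall amortisation scheme (threshold $4a-1$, colour reuse by Alice afterwards via Rule (R4), pairing via Rule (R2.2.1), giving $4a-1+\lceil\frac{|dB(i)|-(4a-1)}{2}\rceil = \lfloor\frac{|dB(i)|}{2}\rfloor+2a$) is the same as the paper's. But there is a genuine gap at exactly the point you flag as ``delicate'': your counting needs the pairing mechanism to be in force from the moment the colour count on $dB(i)$ reaches $4a-1$, i.e.\ it needs every incidence of $dF(i)$ to be coloured by that moment. You do not prove this; instead you propose to bound the number of unpaired Bob moves (those answered by climbing under Rule (R3)) by $2|dF(i)|\le 2a$ using Observation~\ref{obs:climbing}, and to ``absorb this phase into the additive constant''. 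That patch cannot give the stated bound: nothing in your argument prevents those unpaired Bob colourings from occurring \emph{after} the count has reached $4a-1$, each contributing a new colour with no answering move by Alice inside $dB(i)$; your bookkeeping then only yields a bound of roughly $\lfloor\frac{|dB(i)|}{2}\rfloor+3a$. Since the constant $+2a$ is exactly what Lemma~\ref{top} (and hence Theorem~\ref{th:general}) consumes, proving the lemma with a weaker constant is not proving the lemma.

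The missing idea is the one the paper uses to align the two thresholds, and it comes from Lemma~\ref{boundS}, not from Observation~\ref{obs:climbing}. Once $|\phi(dB(i))|=4a-1$, at least $4a-1$ incidences of $dB(i)$ are coloured. Every incidence of $dF(i)$ is a common father of all of them (brothers have the same fathers), so every incidence of $dF(i)$ is a \emph{down} incidence having at least $4a-1$ coloured sons. By Lemma~\ref{boundS}, a down incidence has at most $4a-2$ coloured sons and uncles at the moment it gets coloured; hence no incidence of $dF(i)$ can still be uncoloured once the threshold is reached, as it could then never legally be coloured later. So all of $dF(i)$ is forced to be coloured \emph{before} $|\phi(dB(i))|$ reaches $4a-1$, Rule (R2.2.1) applies to every subsequent Bob colouring in $dB(i)$, and no extra additive term arises. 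With this replacement for your boundary argument, the rest of your amortisation goes through and coincides with the paper's proof.
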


\begin{proof}
For every incidence $i$, as soon as $|\phi(dB(i))| = 4a - 1$, there are at least $4a-1$ colored incidences in $dB(i)$.
If $dF(i)$ is not empty, then every incidence in $dF(i)$ has thus at least $4a-1$ colored sons so that, by  Lemma~\ref{boundS}, every such incidence is already colored.
During the rest of the game, each time Bob will color an incidence of $dB(i)$, if there are still some uncolored incidences in $dB(i)$, then Alice will answer by coloring one of them by Rule (R2.2.1). Hence, Bob will color at most $\lceil\frac{|dB(i) - (4a - 1)|}{2}\rceil$ of these incidences. Since, by Rule~(R3), Alice uses colors already in $\phi(dB(i))$ for the incidences she colors, we get $|\phi(dB(i))| \le 4a - 1 + \lceil\frac{|dB(i) -(4a - 1)|}{2}\rceil \le \lfloor\frac{|dB(i)|}{2}\rfloor+ 2a$ as required.
\end{proof}

\begin{lemma}\label{top}
When Alice or Bob colors a top incidence $i$, there is always an available color for $i$ whenever $|\Phi| \ge \lfloor\frac{3\Delta(G) - a}{2}\rfloor + 8a - 1$.
\end{lemma}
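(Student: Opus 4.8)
The plan is to bound the number of colors forbidden for a top incidence $i$ belonging to an edge $uv$ (with $u$ closer to the root) and to show this number never exceeds $\lfloor\frac{3\Delta(G)-a}{2}\rfloor + 8a - 2$, so that at least one of the $|\Phi| \ge \lfloor\frac{3\Delta(G)-a}{2}\rfloor + 8a - 1$ colors stays available. By the description of the strategy the forbidden colors are exactly $\phi(tF(i)\cup dF(i)\cup tB(i)\cup dB(i)\cup tS(i)\cup dU(i))$, and I would split this set into four groups treated by separate arguments: the incidences $dF(i)\cup tB(i)$ located at $u$, the top-fathers $tF(i)$, the down-brothers $dB(i)$, and the pair $tS(i)\cup dU(i)$.

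First I would observe that $dF(i)\cup tB(i)\cup\{i\}$ is precisely the set of all incidences sitting at the vertex $u$ (the down incidences of the edges entering $u$, together with the top incidences of the edges leaving $u$); these are pairwise adjacent, so $|dF(i)|+|tB(i)| \le \deg_G(u)-1 \le \Delta(G)-1$, giving $|\phi(dF(i)\cup tB(i))| \le \Delta(G)-1$. Writing $f$ for the in-degree of $u$ (which is at most $a$, since each of the $a$ forests supplies at most one edge entering $u$), I have $|tF(i)|=f$ and $|dB(i)| = \deg_G(u)-f \le \Delta(G)-f$. For the top-fathers I keep the bound $|\phi(tF(i))| \le f$, and for the down-brothers the saving comes from Lemma~\ref{bounddB}: $|\phi(dB(i))| \le \lfloor\frac{|dB(i)|}{2}\rfloor + 2a \le \lfloor\frac{\Delta(G)-f}{2}\rfloor + 2a$. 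Finally, since we are in the act of coloring the top incidence $i$, Lemma~\ref{boundS} gives $|\phi(tS(i)\cup dU(i))| \le |tS_c(i)|+|dU_c(i)| \le 5a-1$.

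Summing the four contributions, the number of forbidden colors is at most
\[
(\Delta(G)-1) + f + \Bigl(\bigl\lfloor\tfrac{\Delta(G)-f}{2}\bigr\rfloor + 2a\Bigr) + (5a-1) = \Delta(G) + 7a - 2 + f + \bigl\lfloor\tfrac{\Delta(G)-f}{2}\bigr\rfloor .
\]
The crux is then the elementary identity $f + \lfloor\frac{\Delta(G)-f}{2}\rfloor = \lfloor\frac{\Delta(G)+f}{2}\rfloor$, whose right-hand side is non-decreasing in $f$; as $f\le a$ it is maximized at $f=a$, and a short floor computation (using that $\Delta(G)+\lfloor\frac{\Delta(G)+a}{2}\rfloor = \lfloor\frac{3\Delta(G)+a}{2}\rfloor = \lfloor\frac{3\Delta(G)-a}{2}\rfloor + a$) turns $\Delta(G)+7a-2+\lfloor\frac{\Delta(G)+a}{2}\rfloor$ into $\lfloor\frac{3\Delta(G)-a}{2}\rfloor + 8a - 2$. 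This is exactly $|\Phi|-1$, so a color remains available for $i$.

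The step I expect to be the main obstacle — and the reason the bound is tight — is resisting the temptation to replace $|\phi(tF(i))|$ by the crude $a$: doing so destroys the delicate trade-off between the top-fathers (which grow with $f$) and the halved down-brothers (which shrink with $f$). Only by keeping the exact in-degree $f$ and fusing the two terms through the floor identity does the $-\frac{a}{2}$ improvement over the naive bound $\lfloor\frac{3\Delta(G)}{2}\rfloor + 8a - 2$ emerge; everything else is routine once Lemmas~\ref{boundS} and~\ref{bounddB} are in hand.
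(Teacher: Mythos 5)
Your proof is correct and takes essentially the same route as the paper's: the same six forbidden sets, Lemma~\ref{bounddB} for $dB(i)$, Lemma~\ref{boundS} for $tS(i)\cup dU(i)$, and the same trade-off between the in-degree $f=|dF(i)|=|tF(i)|$ and the halved down-brother term, culminating in the same floor arithmetic. The only cosmetic difference is the grouping---you absorb $dF(i)$ (the incidences at $u$) into the $\Delta(G)-1$ term alongside $tB(i)$, whereas the paper absorbs $tF(i)$ there and pairs $dF(i)$ with $dB(i)$---but since $|tF(i)|=|dF(i)|$ the two computations coincide term by term.
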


\begin{proof}
Let $i$ be any uncolored top incidence. The forbidden colors for $i$ are the colors of $tF(i)$, $dF(i)$, $tB(i)$, $dB(i)$, $dU(i)$ and $tS(i)$. We have:
\begin{itemize}
	\item $|\phi(tF(i))| + |\phi(tB(i))| \le |tF(i)| + |tB(i)| \le \Delta(G) - 1$,
	\item $|\phi(dF(i))| \le |dF(i)| \le a$ and, by Lemma~\ref{bounddB}, $|\phi(dB(i))| \le \lfloor\frac{|dB(i)|}{2}\rfloor + 2a$; 
since $|dF(i)| + |dB(i)| \le \Delta(G)$, we get
\begin{center}\begin{tabular}{rcl}
$|\phi(dF(i))| + |\phi(dB(i))|$ & $\le$ & $|dF(i)| + \lfloor\frac{\Delta(G) - |dF(i)|}{2}\rfloor + 2a$ \\
        & $=$ & $\lceil\frac{3|dF(i)|}{2}\rceil + \lfloor\frac{\Delta(G)}{2}\rfloor + 2a$ \\
        & $\le$ & $\lceil\frac{3a}{2}\rceil + \lfloor\frac{\Delta(G)}{2}\rfloor + 2a$ \\
        & $=$ & $\lfloor\frac{\Delta(G) - a}{2}\rfloor + 3a$,
\end{tabular}\end{center}
	\item $|\phi(tS(i))| + |\phi(dU(i))| \le 5a - 1$ by Lemma~\ref{boundS}.
\end{itemize}
So there are at most $\lfloor\frac{3\Delta(G) - a}{2}\rfloor + 8a - 2$ forbidden colors for $i$ and the result follows. 
\end{proof}

Lemma~\ref{top} shows that, when Alice applies the strategy above described, every top incidence can be colored, provided
$|\Phi| \ge \lfloor\frac{3\Delta(G) - a}{2}\rfloor + 8a - 1$. Lemma~\ref{dBavailable} shows that this is also the case for
down incidences, which proves Theorem~\ref{th:general}.

\bibliographystyle{plain}
\bibliography{ICGA} 

\end{document}